\newtheorem{ar}{A}
\newtheorem{ma}{M}
\newtheorem{ch}{C}
\newtheorem{theorem}{Theorem}
\newtheorem{corollary}{Corollary}
\title{Self-consistency of voting implies majority vote}
\author{Artur Poplawski}
\begin{document}
\maketitle
\begin{abstract}
Paper develops axiomatic characterization of the family of majority vote rules in the way alternative to characterization of the majority vote given 
in paper of Kenneth O May in the 1952. This, similar but different, axiomatics focuses on the consistency of the voting procedure. 
Both approaches are compared. Relation to famous Kenneth J. Arrow's Impossibility Theorem is also discussed.
\end{abstract}
\section{Introduction}
Large number of studies in the fields attributed to Game Theory, mathematical political science  or mathematical theory of welfare 
is devoted to problem of procedure of collective choice (or simply voting). The most famous and celebrated result in this area is 
so called Arrow's Impossibility Theorem published by Kenneth J, Arrow in his article \cite{Ar}. 
Following style of Arrow's work and using the same axiomatic approach, Kenneth O. May published theorem characterizing majority vote
by the set of the postulates or conditions that on the general voting procedure: i.e showing that any procedure satisfynig these postulates 
is in fact a majority voting.
In this short paper we will give different characterization of the broader class of voting procedures each being some version of the 
majority vote. In the next section we will introduce notation and refer Arrow's and May results. Then, next section will contain 
main result and its proof. In last section we  compare proposed axiomatization with this of May, discuss to what extent it generalizes it
and present concepts for some further studies. 

\section{Arrow's and May result}
Let's fix notation. We will use capital letters ($A$, $V$ etc.) to denote sets, lower case letter ($x$, $v$, etc.) for variables 
denoting elements of sets. 
Cartesian products of set $A_{v}$ indexed by set $V$ vill be denoted by $\prod_{v \in V}A_{v}$. In the most common context we will have
$A_{v}=A$ for some $A$ and we will write simply $\prod_{v \in V}A$. Let us note, that the elements of this last set are all functions
$V \rightarrow A$ (so, using other common notation $\prod_{v \in V}A = A^{V}$. We will denote cardinality of some set $X$ by the $|X|$. 

In \cite{Ar} Arrow models situation where finite set of agents $V$ posses some preferences related to finite set of possible choices 
(alternatives) $A$. Voter's $v$ preference is weak order relation $R_{v}$  on the set $A$ 
\footnote{Weak order  set $A$ is a relation 
$R\subset A \times A$such that it is 1) reflective: $\forall x: x \in \implies (x,x) \in A $ 2) transitive: 
$\forall x,y,z \in A: (x,y) \in R, (y,z) \in R \implies (x,z) \in R$}.
Some voting procedure, or social choice leads to weak order $R$ on the set $A$ which, to some extent should reflect the preferences of the 
voters. Arrow considers following five conditions satisfying of which one may require from the fair voting procedure.

\begin{ar}\label{ar1}
Procedure leads to function $f:\prod_{v \in V}Ord_{A}\rightarrow Ord_{A}$ 
\end{ar} 
This condition just states that procedure is \emph{sound} - it depends only on the individual choices and is deterministic. It excludes the procedures like drawing. 

\begin{ar}\label{ar2}

$\forall x, y \in \prod_{v \in V}Ord_{A}, \forall a, a' \in A: ((a,a') \in f(x) \land  v' \in V \and (\forall v \in V-\{v'\} x_{v}=y_{v}) \land (\forall s, t \in A-\{a, a'\}: (s,t) \in x_{v'} \iff (s,t) \in y_{v'}) \land (a', a) \in x_{v'} \land (a,a') \in y_{v'} \implies (a, a') \in f(y) $ 

\end{ar} 
This somewhat complicated statement describe \emph{nonnegative responsiveness} to the choice of individuals, and may be stated
in less formal way as: if choice of the single voter changes with respect to two alternatives $a$ and $a'$ in the favor of the social choice for these alternatives, outcome of the social choice regarding these alternatives should not change.  

\begin{ar}\label{ar3}
$\forall a, a' \in A, \forall x, y \in \prod_{v \in V}Ord_{A}, \forall v \in V: (a,a') \in x_{v} \iff (a',a) \in y_{v} \implies (a,a') \in f(x) \iff (a,a') \in f(y)$
\end{ar}
This condition is called \emph{indifference to irrelevant choices}. It says, less formally, that if voters change their preferences but preserve preferences regarding some two alternatives, the prefernece between these alternatives in the social choice should also stay unchanged.   

\begin{ar}\label{ar4}
$\forall a, a' \in A \exists x, y \in \prod_{v \in V}Ord_{A}: (a,a') \in f(x) \land (a',a) \in f(y)$
\end{ar}
This means, that function is \emph{not imposed} in the terminology of Arrow. Every choice of social preferences is a priori possible as an outcome, there is no taboo.

\begin{ar}\label{ar5}
$\forall a, a' \in A \forall v \in V \exists x \in \prod_{v \in V}Ord_{A}: (a,a') \in x_{v} \land (a,a') \notin f(x)$ 
\end{ar}
In other words, there is no dictator among voters - every vote can be overthrown. Arrow calls this property \emph{no dictatorship}

The thesis of the Arrow is that there is no system satisfying the set of postulates A \ref{ar1}-A \ref{ar5}. It also often expressed in the
literature in equivalent but even more striking form:

\begin{theorem}[Arrow's Impossibility Theorem]\label{ait}
Every voting procedure leading to the function satisfying A \ref{ar1}= A \ref{ar4}, there is voter $v$ who is the dictator. 
\end{theorem}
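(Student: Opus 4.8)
The plan is to prove, under A\ref{ar1}--A\ref{ar4} (and assuming $|A|\ge 3$), the existence of a voter $v$ whose strict preference always dictates the social preference; this is the dictator whose existence the theorem asserts. I would organize the argument in two phases. The first phase extracts a unanimity (weak Pareto) principle from the monotonicity and non-imposition hypotheses: whenever every voter ranks $a$ strictly above $a'$, the social order $f(x)$ ranks $a$ strictly above $a'$. The starting point is A\ref{ar4}, which guarantees a profile $y$ with $(a,a')\in f(y)$; since A\ref{ar3} makes the social verdict on the pair $\{a,a'\}$ depend only on how individuals rank that pair, I would apply A\ref{ar2} voter by voter, each time turning a voter in favor of $a$ over $a'$ without disturbing the outcome on that pair, until everyone prefers $a$ to $a'$ while $(a,a')$ stays in $f$. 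Independence then transports this conclusion to every profile in which $a$ is unanimously preferred to $a'$.

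With unanimity and A\ref{ar3} available, the second phase runs the classical decisive-coalition argument. I would call a set $G\subseteq V$ \emph{decisive for} the ordered pair $(a,b)$ if $(a,b)\in f(x)$ strictly whenever every voter in $G$ ranks $a$ above $b$, regardless of the voters outside $G$, and simply \emph{decisive} if it is decisive for every ordered pair. Two lemmas carry the load. The \emph{field-expansion lemma} states that if $G$ is decisive for even one pair then $G$ is decisive for all pairs; I would prove it by introducing a third alternative $c$ and arranging the voters inside and outside $G$ so that A\ref{ar3} forces the social ranking of an arbitrary target pair to track $G$'s ranking, with unanimity fixing the auxiliary comparisons. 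The \emph{contraction lemma} states that a decisive $G$ with $|G|\ge 2$ has a proper nonempty decisive subset; here I would split $G$ into two disjoint nonempty parts $G_1, G_2$, build a three-alternative profile, and use transitivity of the social weak order $f(x)\in Ord_A$ together with the decisiveness of $G$ to conclude that one of $G_1$, $G_2$ is itself decisive for some pair, hence, by field expansion, decisive.

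To finish, I would observe that unanimity makes the whole electorate $V$ decisive. Because $V$ is finite, repeatedly applying the contraction lemma yields a strictly descending chain of decisive sets that must terminate at a singleton $\{v\}$. By field expansion this $\{v\}$ is decisive for every ordered pair, which is precisely the statement that $v$ is a dictator, completing the proof.

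The main obstacle I expect is not the conceptual structure, which is classical, but the bookkeeping forced by the way the axioms are stated. Conditions A\ref{ar2} and A\ref{ar3} speak only about single pairs and single-voter changes, so in both lemmas I must build the auxiliary profiles so that the literal hypotheses of A\ref{ar3} hold pairwise and each monotonicity step of A\ref{ar2} is a genuine single-voter move. The most delicate point is handling the weak rather than strict parts of the orders: deriving strict social preference (not mere weak preference) from unanimity, and invoking transitivity of $f(x)$ at exactly the right moment in the contraction step, are where the care is needed.
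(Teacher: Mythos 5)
You should know at the outset that the paper never proves this statement: Theorem \ref{ait} is quoted purely as background, with its proof deferred to Arrow's original article \cite{Ar}, so there is no in-paper argument to compare yours against. Judged on its own merits, your outline is the classical decisive-coalition proof and it is sound. Phase one (deriving weak Pareto from A \ref{ar2}, A \ref{ar3} and A \ref{ar4}) is exactly how Arrow's original axiom set, which uses positive association plus non-imposition rather than a Pareto condition, is reduced to the modern Pareto-based formulation; phase two (field expansion plus contraction of decisive coalitions, terminating at a singleton by finiteness of $V$) is the standard argument. Your insertion of the hypothesis $|A|\ge 3$ is not merely prudent but necessary: for $|A|=2$ majority rule satisfies A \ref{ar1}--A \ref{ar4} with no dictator, so the theorem as printed (which omits this hypothesis) is false without it, and your contraction lemma genuinely consumes a third alternative.

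Two points deserve more care than your sketch gives them, both traceable to the paper's loose formalization rather than to a flaw in your plan. First, the single-voter monotonicity moves in phase one are feasible only because A \ref{ar2} permits the moving voter to alter comparisons between $a$ (or $a'$) and third alternatives; if it froze every pair except $\{a,a'\}$, flipping that one pair inside a transitive order would in general be impossible (keeping $a'\succ c$ and $c\succ a$ while imposing $a\succ a'$ creates a cycle). You should make the move explicit, e.g.\ lift $a$ to the top and push $a'$ to the bottom of the voter's order. Second, A \ref{ar4} as literally written yields only weak social preferences $(a,a')\in f(x)$ and $(a',a)\in f(y)$, possibly both realized by a single indifferent outcome; to launch your unanimity argument and obtain \emph{strict} social preference you must read it in Arrow's intended sense (non-imposition for the strict relation, together with completeness of the social order, which the paper's footnote defining weak orders also omits). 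With those readings fixed, your two lemmas go through as in the standard treatment, and the terminal singleton coalition is a dictator in precisely the sense that negates A \ref{ar5}.
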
 

By the way, it is the existence of the dictator in some cases what prevented us from using the Cartesian power $A^{n}$ in statements 
of the conditions - in Arrow's setup the individuality may matter.

Besides the natural philosophical discussion of the consequences of this theorem for theory of democracy and practical questions
in what sense  rules applied in the democratic institution violate and which of the Arrow's condition, his publication triggered 
further studies on formal properties of voting procedures. As one can perceive Arrow's result as counterintuitive, in 1952 
Kenneth O. May published the paper \cite{May} were he stated positive and much more intuitive result. He considered  following 
situation, and entity he called \emph{group decision function}.
We have set of voters $V$ and fix the set of three alternatives: $A=\{-1,0,1\}$.  Outcome of the procedure is the choice of the oner of the alternative.
$-1$ and $1$ are "real" choices, $0$ is treated as as $no-choice$, so skipping the voting from the perspective of the voter and tie
(whatever tie would mean for an arbitrary procedure) so lack of decision when treated as an outcome.
In such formalism May postulates following four conditions.

\begin{ma}\label{ma1}
Procedure leads to function $f:\prod_{v \in V}A\rightarrow A$. We will use $f$ to name this function is subsequent conditions.
\end{ma}
It is similar to Arrows \ref{ar1} and have the same meaning of soundness.

\begin{ma}\label{ma2}
$f$ is symmetric, so $\forall x, y \in \prod_{v \in V}A, \forall v, w \in V: (\forall z \in V-\{v,w\} x_{z}=y_{z} \land x_{v}=y_{w} \land y_{v}= y_{w}) \implies f(x)=f(y)$
\end{ma}
Here we are a little bit more pedantic than May's paper, although May's condition may be easily deduced to  be equivalent to this  one. 
Ma \ref{ma2} may be also interpreted as assumption that $f$ does depend only on the number of voters selecting particular choice. It also 
implies, that procedure does not have dictatorship.

\begin{ma}\label{ma3}
$\forall x, y' \in \prod_{v \in V}A: (\forall v \in V:  x_{v} = -y_{v}) \implies f(x) = -f(y)$ 
\end{ma}
This condition is called \emph{neutrality} by May. It is coveniently formulated thanks to the special "algebraic" structure of the 
in the set of alternatives. 

\begin{ma}\label{ma4}
$\forall x, y \in \prod_{v \in V}A \exists v \in V : (\forall w \in V-\{v\} x_{w}=y_{w}) \land x_{v} \neq 0 \land x_{v}=-y_{v} \land (f(x)=0 \lor f(x)=y_{v}) \implies f(y)=y_{v}$
\end{ma}
May calls this condition \emph{positive responsiveness}. Intuitive meaning of this condition is, the "change of the mind" of voter in favor of the 
collective choice or in the case of tie leads to the same choice in former case and to the switching the collective choice according to the decision
of voter in the other. 

May proves in elementary way (proof is much simpler than known proofs of Arrow's theorem) that the only $f$ satisfying this condition is the "majority vote", 
which in his setting  can be described as:
 
\[
f(x) = \begin{cases}
1 &\text{if } \sum_{v\in V}x_{v} > 0\\
0 &\text{if } \sum_{v\in V}x_{v} = 0\\
-1 &\text{if } \sum_{v\in V}x_{v} < 0\\
\end{cases}
\]

\section{Main results.}
Here we formulate result similar to May's, assuming however, instead of May's conditions Ma \ref{ma2} - Ma \ref{ma4} some structural rule concerning 
the consistency of the voting procedure among different setting.
Let's note, that both, Arrow's and May's conditions are expressed as a postulates to a procedure applicable to some given set of voters. I.e. 
there is no a priori relation between the considered procedures of voting for different number of voters. 
It is, perhaps, more natural to formulate such rules in the way that is applicable to all possible finite set. 
Although this distinction is not essential for expressing May's and Arrow's results it is hidden assumption.
We will formulate three postulates for the set of voters $V$, set of alternatives $A$ and "consistent" family of choice procedures, where members of the family are the procedures for the set of voters of different cardinality. We will assume, that in $A$ there is a special alternative $\bot$ which will have interpretation of the tie. We also introduce simple weak ordering relation $R$ in $A$, which will be defined as a $R=\{(\bot, x)| x \in A\}$

\begin{ch}\label{ch1}
Procedure leads to functions $f_{i}: \prod_{v \in V}A  \rightarrow A$ for $|V| = i$
\end{ch}
Although we consider the family of functions, $\{f_{i}\}_{i \in \mathbb{N}}$, whenever it will not lead to the confusion 
we will use unindexed symbol $f$ assuming that $i$ is equal to the cardinality of $|V|$ of currently considered $V$.

\begin{ch}\label{ch2}
For set of voters $V$ any $x \in \prod_{v \in V}A$ and any permutation $\pi:A \rightarrow A$ such that $\pi(\bot)=\bot$ we have 
$f(\pi \circ x ) = \pi(f(x))$  
\end{ch}

\begin{ch}\label{ch3}
For set of voters $V$ any $x \in \prod_{v \in V}A$ and any permutation $\pi:V \rightarrow V$ such that we have 
$f(x \circ \pi) = f(x)$  
\end{ch}

Both conditions together form condition similar to the Ma \label{ma2} and expresses symmetry so, in fact, the property, 
that $f$ depends only on number of voters not on the particular voter.  
 
\begin{ch}\label{ch4}
For set of voters $V$ any $x \in \prod_{v \in V}A$ if $w \notin V$ and $x' \in \prod_{v \in V\cup\{w\}}A$  such that 
$\forall v \in V, x_{v}=x'_{v}$ and $x'_{w} = \bot$ we have $f(x)=f(x')$
\end{ch}

This may be considered as a implicit definition of the skipping (if interpreted as voters decision) 
or tie (if interpreted as social choice), and distinguish tie among the possible alternatives. 

\begin{ch}\label{ch5}
For set of voters $V$ any $x \in \prod_{v \in V}A$ if $w \notin V$ and $x' \in \prod_{v \in V\cup\{w\}}A$  such that 
$\forall v \in V, x_{v}=x'_{v}$ and $x'_{w} = f(x)$ we have $f(x)=f(x')$
\end{ch}
This is the condition, that we can call \emph{consistency}. Intuitively it states, that decision procedure is such, that
whenever one extends the set of voters with one new voter with preference being the same social choice of the initial set, 
the social choice of the new set must remain the same.
We introduce notation, for $a \in A$ and $x \in \prod_{v \in V}A$  $\phi_{a}(x) = |\{v| x_{v}=a\}|$
Main result of the work is following:
\begin{theorem} \label{main}
For set of alternatives $A$ with distinguished alternative $\bot$ and for family of sets $\{V_{i}\}_{i\in\mathbb{N}}$, any procedure satisfying the C \ref{ch1} - C \ref {ch4} leads to social choice function $f$ such that: $f(x) \neq \bot \implies \forall a \in A -\{f(x), \bot \}\phi_{f(x)}(x)>\phi_{a}(x)$
\end{theorem}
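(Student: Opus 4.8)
The plan is to reduce the whole statement to one symmetry fact and then exploit consistency. Write $\phi(x)=(\phi_a(x))_{a\in A}$ for the vector of vote counts, and note first that condition C\ref{ch3} makes $f$ depend only on $\phi(x)$: any two profiles on the same voter set with identical count vectors differ by a permutation of the voters, so $f$ agrees on them. The core auxiliary claim — call it the \emph{balance lemma} — is: if $a,b\in A\setminus\{\bot\}$ are distinct and $\phi_a(x)=\phi_b(x)$, then $f(x)\notin\{a,b\}$.

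To prove the balance lemma I would take the transposition $\pi=(a\,b)$ that swaps $a$ and $b$ and fixes every other alternative, in particular $\pi(\bot)=\bot$, so that C\ref{ch2} applies and gives $f(\pi\circ x)=\pi(f(x))$. Applying $\pi$ to every ballot merely interchanges the counts of $a$ and $b$; since these counts are equal, $\pi\circ x$ has the same count vector as $x$, whence $f(\pi\circ x)=f(x)$ by the dependence-on-counts observation above. Comparing the two expressions yields $\pi(f(x))=f(x)$, and because $\pi$ is the identity exactly off $\{a,b\}$ while interchanging $a$ and $b$, its only fixed points are the alternatives outside $\{a,b\}$; therefore $f(x)\neq a$ and $f(x)\neq b$.

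With the lemma in hand I would prove the theorem contrapositively. Suppose $f(x)=b\neq\bot$ but there is an alternative $a\in A\setminus\{b,\bot\}$ with $\phi_a(x)\ge\phi_b(x)$. By the consistency condition C\ref{ch5} I may adjoin a fresh voter whose ballot is the current winner $b$ without changing the outcome; since the outcome is still $b$, I repeat, adjoining in total exactly $\phi_a(x)-\phi_b(x)$ such $b$-ballots. The resulting profile $x'$ satisfies $f(x')=b$, leaves $\phi_a$ untouched, and has $\phi_b(x')=\phi_a(x')$. Now the balance lemma, applied to the balanced pair $a,b$, forces $f(x')\notin\{a,b\}$, contradicting $f(x')=b$. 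Hence $\phi_{f(x)}(x)>\phi_a(x)$ for every $a\in A\setminus\{f(x),\bot\}$, which is exactly the asserted conclusion.

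I expect the balance lemma to be the delicate step, since it is where the two different symmetries must be made to act on one and the same profile: neutrality (C\ref{ch2}) permutes alternatives while anonymity (C\ref{ch3}) permutes voters, and the equality $\phi_a(x)=\phi_b(x)$ is precisely the hypothesis needed to identify the neutrality image $\pi\circ x$ with a voter-rearrangement of $x$ so that both conditions bear on the same profile. The consistency reduction afterwards is routine; its only subtlety is that C\ref{ch5} must be re-applied at each stage to the updated profile, which is legitimate exactly because it preserves the winner at $b$ throughout. Note that condition C\ref{ch4} (invariance under abstentions) is not required for this direction.
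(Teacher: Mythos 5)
Your proof is correct and follows essentially the same route as the paper's: equalize the two vote counts by repeatedly adjoining voters who vote for the current winner (consistency), then derive a contradiction by applying neutrality and anonymity to the transposition swapping the two tied alternatives --- your balance lemma is exactly the paper's explicit construction of the permutations $\pi_{A}$ and $\pi_{V}$, repackaged via the observation that $f$ depends only on the count vector. Your closing remark is also apt: the paper's proof cites C4 but the step it performs (adjoining a voter whose ballot equals the current outcome) is in fact C5, so the conditions genuinely used are C2, C3 and C5, exactly as in your argument.
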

Intuitively speaking, result says, that procedure assuming our postulates leads to the social choice functions which, if returns conclusive
result, the result must be choice of the absolute majority.

\begin{proof}

Let's take $a \in \prod_{v \in V}A$. If there is $x \in A$ such that $x \neq \bot$ and for all $y \in A$ such that $y \neq x$
and $y \neq \bot$ we have $\phi_{a}(x) > \phi_{a}(y)$, than $f(a) = x$ or $f(a) = \bot$.
Indeed, if $f(a) = y$ for some $y \neq x$ we know from C \ref{ch4} that enlarging $V$ by the single element $v'$, in such a way, that
preference of this new voter is $y$ (we will receive tuple $a'=a \cup (v',y)$ for which $f(a') = y$.
Repeating the operation $k = \phi_{a}(x) - \phi_{a}(y)$ times, we receive $a^{\star} \in \prod_{v \in V'}A$ where $V'=V \cup \{v{1}, \ldots, v_{k}\}$, such that 
$a^{\star}(v)=a(v)$ for $v \in V$, $a^{\star}(v_{i})=y$ for $i=1,\ldots,k$ and $\phi_{a^{\star}}(x) = \phi_{a^{\star}}(y)$.
We also have from C \ref{ch4} that $f(a^{\star}) = y$. 

Now, let's define two permutations. First is  $\pi_{A}:A \rightarrow A$:
\[
\pi_{A}(t) = \begin{cases}
x &\text{if } t = y\\
y &\text{if } t = x\\
t &\text{otherwise}\\
\end{cases}
\]
Obviously under our assumptions we have $\pi_{A}(\bot)=\bot$.
Let's take any permutation $\chi: (a^{\star})^{-1}(x) \rightarrow (a^{\star})^{-1}(y)$. Such a permutation exists, because of equal cardinalities of both preimages.
We define second permutation $\pi_{V'}:V' \rightarrow V'$ by:
\[
\pi_{V}(t) = \begin{cases}
\chi(x) &\text{if } t \in (a^{\star})^{-1}(x)\\
\chi^{-1}(y) &\text{if } t \in (a^{\star})^{-1}(y)\\
t &\text{otherwise}\\
\end{cases}
\]

Under our assumptions about $a$, properties of $a^{\star}$ and definitions we have just given, we have:
$y = f(a^{\star}) = f(a^{\star} \circ \pi_{V}) = f(\pi_{A} \circ a^{\star}) = \pi_{A}(f(a^{\star}) = \pi_{A}(y)=x$
what is a contradiction.  
\end{proof}

From the theorem we have immediate corollary:
\begin{corollary}\label{col1}
Assuming $f$ satisfies C \ref{ch1}-\ref{ch5}, if $a \in \prod_{v \in V}A$ is such, that there exists $x, y \in A$ such that $x \neq y$,$ \phi_{a}(x) = \phi_{a}(y)$ and for 
all $z \in V$ $\phi_{a}(x) \geq \phi_{a}(z)$ we have $f(a) = \bot$
\end{corollary}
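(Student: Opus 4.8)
The plan is to deduce the corollary directly from Theorem~\ref{main} by a short contradiction argument; notice that in fact only C~\ref{ch1}--C~\ref{ch4} are needed, since all the work is done through the theorem. So suppose, toward a contradiction, that $f(a)\neq\bot$ and write $w:=f(a)$. Because $w\neq\bot$, Theorem~\ref{main} applies to the profile $a$ and tells us that the chosen alternative strictly outpolls every other genuine alternative; in the notation of the corollary this reads $\phi_{a}(w)>\phi_{a}(t)$ for every $t\in A-\{w,\bot\}$.

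Next I would bring in the tie hypothesis. Set $M:=\phi_{a}(x)=\phi_{a}(y)$ for the common, maximal count, so that $\phi_{a}(z)\le M$ for all $z\in A$, where $x$ and $y$ are to be read as genuine alternatives, i.e. $x,y\neq\bot$. Since $x\neq y$, the single winner $w$ can coincide with at most one of them, so I may choose $u\in\{x,y\}$ with $u\neq w$. Then $u\in A-\{w,\bot\}$, and the strict inequality furnished by Theorem~\ref{main} gives $\phi_{a}(w)>\phi_{a}(u)=M$. On the other hand, maximality of $M$ forces $\phi_{a}(w)\le M$. These two estimates are incompatible, so the assumption $f(a)\neq\bot$ is untenable and $f(a)=\bot$, which is exactly the claim.

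The step I expect to require the most care is the final link: applying the strict inequality of Theorem~\ref{main} to $u$ is legitimate only because $u\neq\bot$, which is precisely why the two tied alternatives must be taken to be genuine. If one of $x,y$ were permitted to equal $\bot$, the chain would break---and so would the statement itself, for Theorem~\ref{main} says nothing about how $\phi_{a}(\bot)$ compares with $\phi_{a}(w)$, and a profile in which a real alternative ties $\bot$ for the top count can still produce a conclusive winner. I would therefore record the reading $x,y\neq\bot$ (and correct the evident slip ``$z\in V$'' to $z\in A$) explicitly before running the contradiction, after which the argument is entirely routine.
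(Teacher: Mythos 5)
Your derivation is logically valid and takes a genuinely different route from the paper. The paper does not deduce Corollary~\ref{col1} from the \emph{statement} of Theorem~\ref{main}; it re-runs the construction from the theorem's proof: assuming $f(a)=s\neq\bot$, it splits into the case $\phi_{a}(s)=\phi_{a}(x)$, where the permutation swapping $s$ with the element of $\{x,y\}-\{s\}$ together with C~\ref{ch2} and C~\ref{ch3} gives the contradiction directly, and the case $\phi_{a}(s)<\phi_{a}(x)$, where one first pads the profile with $s$-voters via the consistency condition to equalize the counts and then permutes. Your two-line deduction is shorter, and your two editorial observations are correct and genuinely needed: the statement is false unless $x,y\neq\bot$ (pure majority rule on a profile with two votes for a genuine alternative and two $\bot$ votes satisfies all the conditions yet returns the genuine alternative), and ``$z\in V$'' must be read as $z\in A$. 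There is, however, a caveat that explains why the paper reproves rather than cites. What the proof of Theorem~\ref{main} actually establishes is: if some $x\neq\bot$ strictly outpolls every other non-$\bot$ alternative, then $f(a)\in\{x,\bot\}$. That is weaker than the statement you invoke (``$f(a)\neq\bot$ implies $f(a)$ strictly outpolls every other genuine alternative''); the two readings differ precisely on profiles with a tie at the top, which is exactly the situation of the corollary. So if the theorem is read as what its proof delivers, your deduction becomes circular---the corollary \emph{is} the missing case---whereas the paper's self-contained re-derivation fills that gap. Your proof stands if the theorem's statement is taken at face value, which is a legitimate way to prove a corollary, but it makes the corollary inherit the incompleteness of the theorem's proof. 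A second, minor point: you claim only C~\ref{ch1}--C~\ref{ch4} are needed because the theorem is stated for those; in fact the padding step in the theorem's proof (``add a voter whose preference equals $f(a)$'') is an application of C~\ref{ch5}, mislabeled there as C~\ref{ch4}, so C~\ref{ch5} cannot actually be dropped.
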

\begin{proof}
Let's assume that we $f(a) = s \neq \bot$. Either $s$ is such that $\phi_{a}(z)=\phi_{a}(x)$ than we can construct the permutations as n the proof of the main theorem 
taking two elements: $s$ and one of the elements in $\{x,y\}-\{s\}$ and obtaining the contradiction. Otherwise, again we can reason as in the proof of main theorem 
obtaining $a^{\star}$ such that $f(a^{\star})=s$, $\phi_{a^{\star}}(s) =  \phi_{a^{\star}}(x) = \phi_{a^{\star}}(y)$ and $\phi_{a^{\star}}(s) \geq \phi_{a^{\star}}(t)$ 
for all $t \in dom(a^{\star}))$  
\end{proof}

From what we have just proven it is clear, that procedure satisfying C \ref{ch1}-\ref{ch5} are majority voting with some situations of tie - so voting is not conclusive.
There are two types of situations of tie. Some are unavoidable, so are simply consequences of the assumptions about the the procedure and will be a tie in every rule - these 
are described by the corollary \ref{col1}. Others may be called singular and it is them which differentiate between procedures. For each procedure however, the 
singular existence of some singular tie may imply existence of the other singular ties. We have obvious consequence of the C \ref{ch5}  
\begin{corollary}\label{col2}
For each $a \in \prod_{v \in V}A$ if $f(a)=\bot$ $a \neq b \cup (v, x)$ where $b \in V-\{v\}$ and $f(b) = x$ and $x \neq \bot$
\end{corollary}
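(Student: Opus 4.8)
The plan is to prove the statement by contraposition, reading it as the direct logical negation of the consistency axiom C\ref{ch5}. Concretely, the corollary asserts that a tie $f(a)=\bot$ rules out writing $a$ as $b\cup\{(v,x)\}$ with $f(b)=x\neq\bot$; equivalently, whenever $a$ does admit such a decomposition, we must have $f(a)=x\neq\bot$, so in particular $f(a)\neq\bot$. I would therefore establish this equivalent positive form and then pass to the contrapositive.

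First I would set up the decomposition explicitly. Suppose $a\in\prod_{v\in V}A$ can be written as $a=b\cup\{(v,x)\}$, meaning that $b\in\prod_{w\in V-\{v\}}A$ is the restriction of $a$ to the voters $V-\{v\}$, that $a_{v}=x$, and that $f(b)=x$ with $x\neq\bot$. The goal is then to show $f(a)=x$.

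The key step is a single application of C\ref{ch5}. I would invoke that axiom with base set of voters $V-\{v\}$ and base profile $b$, taking the fresh voter (the $w$ of C\ref{ch5}) to be $v$ itself, which is legitimate since $v\notin V-\{v\}$. The hypothesis of C\ref{ch5} requires that the new voter cast the ballot $f(b)$; here the new voter $v$ casts $a_{v}=x=f(b)$ by assumption, so the hypothesis is met and the extended profile is exactly $a$. The conclusion of C\ref{ch5} then yields $f(a)=f(b)=x$. Since $x\neq\bot$ we obtain $f(a)\neq\bot$, and taking the contrapositive gives precisely the corollary: if $f(a)=\bot$, then no such decomposition of $a$ exists.

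I expect no genuine obstacle here, as the argument is a direct instantiation of C\ref{ch5}; the only point demanding care is bookkeeping in the notation — verifying that renaming the distinguished new voter of C\ref{ch5} to $v$ is legitimate (it is, because $v\notin V-\{v\}$) and that the ballot requirement of C\ref{ch5}, namely that the new voter casts the social choice of the base profile, matches our assumption $a_{v}=f(b)$.
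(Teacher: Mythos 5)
Your proof is correct and takes essentially the same approach as the paper: the paper offers no explicit argument at all, simply calling the corollary an ``obvious consequence'' of C~\ref{ch5}, and your single instantiation of C~\ref{ch5} (with base profile $b$ on $V-\{v\}$ and new voter $v$ casting $f(b)=x$) followed by contraposition is exactly that intended argument.
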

In particular, if $f(a) = \bot$ and $x \neq \bot$ is such that $\phi_{a}(x) > \phi_{a}(z)$ for $z \neq x$ we have $f(a') = \bot$ where $a'=a|_{V -\{v\}}$ for $v$ such that $a(v)=x$.

Now let's characterize "pure majority" vote as vote without the singular ties. For established $V$ and $A$ there is obviously only single pure majority vote procedure.
It van be also characterized in other form. If we assume, under the set of alternatives $A$ that voting procedure is defined on the finite subsets of the universum of 
voters $W$, we can introduce partial order among the procedures, by specifying $f < g$ iff for each $V \in Fin(W)$ and each $a \in \prod_{v \in V}A$ $f(a) = \bot$ or $f(a) = g(a)$.
Pure majority vote will be the (single) maximal element according to this order.  

\section{Comments and future work}
Characterization of the majority rules by C \ref{ch5} is different in spirit comparing to the May's condition. Also conditions here cover more general and at the same time 
more delicate situation of multiple choices. That is partially the reason while conditions does not look so elegant: May considering only the dual choice and tie, introduced 
some algebraic structure in the set of alternatives what allowed him to express majority voting in concise form. Also his conditions of the breaking the ties arise in natural way
while here, to get pure majority rule we used lack of the singular ties or maximum of special order. It is however worth to note, that if one introduces additional condition, 
exactly like a May's, namely:
\begin{ch}
For each $a \in \prod_{v \in V}A$ we have $f(a) = \bot$ implies existence of the $b \in  \prod_{v \in V \cup \{v'\}}A$ where $v' \notin V$ and $a=b|_{V}$ such that
$f(b) \neq \bot$ 
\end{ch}
the only procedure satisfying all conditions would be pure majority rule.
However, some singular ties are are quite common in voting procedures we can meet in nature. Typical example is a quorum rule, where $f=\{f_{i}\}_{i \in \mathbb{N}}$ is such that
$f_{i} \equiv \bot$ for $i < N$ for some $N$ and $f_{i}$ being the majority rule for $i \geq N$. Other examples are voting procedures where to choose appropriate fraction
of votes must be caat on choice. E.g. such a rule would be $f(a) = x \neq \bot$ if and only if $ \phi_{a}(x) > \frac{1}{2}|a|$ or, as a different example 
$ \phi_{a}(x) > \frac{1}{2}(|a| -\phi_{a}(\bot)$

It would be interesting to find simple and natural axiom to be added to conditions C in order to eliminate "undesirable" singular ties but still 
cover the singular ties leading to quorum or "strong majorities" like in examples above.

Looking at the proof of the Arrow theorem and it relationship to some abstract mathematical constructions (e.g. \cite{Ab}), one can also ask
if, although much simpler, May's theorem and Theorem \ref{main} have any interesting intepretations in similar spirit.

\end{document}